\documentclass[preprint]{imsart}
\usepackage[a4paper, total={6in, 8in}]{geometry}
\RequirePackage{amsthm,amsmath,amsfonts,amssymb}
\RequirePackage[authoryear,numbers,sort&compress]{natbib}

\RequirePackage[colorlinks,citecolor=blue,urlcolor=blue]{hyperref}
\RequirePackage{graphicx}
\usepackage{comment}
\usepackage{multirow}
\usepackage{float}
\usepackage{subcaption}
\usepackage{thmtools, thm-restate}
\startlocaldefs
\theoremstyle{plain}

\newtheorem{theorem}{Theorem}[section]
\newtheorem{definition}[theorem]{Definition}

\newtheorem{proposition}{Proposition}[section]
\newtheorem{corollary}[proposition]{Corollary}
\newtheorem{rem}{Remark}

\theoremstyle{remark}

\endlocaldefs

\begin{document}

\begin{frontmatter}
\title{Sensitivity Analysis for False Discovery Rate Estimation with Published p-Values}
\runtitle{FDR Estimation with Published p-Values}

\begin{aug}
\author[A]{\fnms{Tianyu}~\snm{Cao}\ead[label=e1]{tcao@uwyo.edu}},
\author[B]{\fnms{Sangyoon}~\snm{Yi}\ead[label=e2]{sayi@okstate.edu}}
\and
\author[B]{\fnms{Joshua}~\snm{Habiger}\ead[label=e3]{jhabige@okstate.edu}}
\address[A]{Department of Zoology and Physiology,
University of Wyoming\printead[presep={,\ }]{e1}}

\address[B]{Department of Statistics,
Oklahoma State University \printead[presep={,\ }]{e2,e3}}
\end{aug}

\begin{abstract}
There is recent interest in estimating the false discovery rate (FDR) with published p-values. However, there is little formal research that addresses the manner and extent to which the presumed selection, or publication, bias model impacts the bias and variance of FDR estimators. This manuscript provides general and closed-form expressions for the bias and variance of an established FDR estimator when the publication bias model ($p<0.05$) may or may not be correct.  Expressions reveal that FDR estimates could be conservative or liberal, depending on how well a $p<0.05$ publication rule approximates the true selection mechanism. Analysis of a well-studied large-scale replication project in psychology, where selection model parameters are estimable, suggests that bias expressions are accurate in practice. Another well-studied collection of p-values mined from medical journal abstracts is used to illustrate how provided closed-form expressions may facilitate a simple sensitivity analysis when the goal is FDR estimation using selected p-values with unknown selection mechanism. 
 \end{abstract}


\begin{keyword}
\kwd{Publication Bias}
\kwd{False Discovery Rate}
\kwd{Selective Inference}
\kwd{Sensitivity Analysis}
\end{keyword}

\end{frontmatter}
\baselineskip=20pt
\section{Introduction}
\cite{Sterling1959, Rosenthal1979, Soric1989, Ioannidis2005} argued that bias occurring from publication policies that favor statistical significance, called publication bias, could lead to a high false discovery rate (FDR) among published results.  The basic argument is understood through the lens of Bayes' theorem: If most null hypotheses submitted for publication are true (i.e., the prior null probability is high) and a result is published only if $p<0.05$, then the posterior null probability may still be high even if $p<0.05$. Consequently, the proportion of false discoveries among published discoveries may still be high. Of course, if the prior null probability is low, then similar arguments imply that the FDR among published results is low. \cite{Ioannidis2005,Higginson2016,Campbell2019}, among many others, argue that publication incentives, limited resources, and $p<\alpha$ publication policies naturally lead to a high prior null probability.  However, in practice, the prior null probability is rarely known, and mechanisms leading to the publication and observation of a $p$-value are more nuanced than ``$p<0.05$'' (cf. \cite{Benjamini2020}). 

This has led to an interest in FDR estimation with published, or more generally, ``selected'', $p$-values. For example, \cite{Jager2014, Schimmack2023} selected 5000+ statistically significant $p$-values from abstracts of top medical journals between 2000 and 2010 and estimated $\widehat{\text{FDR}} = 0.14 (\pm 0.01$). \cite{Schneck2023} estimated the FDR using p-values and test statistics mined from 35,515 psychology papers published between 1975 and 2017, and reported $\widehat{\text{FDR}} = 0.177$. \cite{Johnson2017} analyzed \cite{OSC2015} replication study data, which contains a selection of 73 published $p$-values in psychology and their corresponding replicated $p$-values. Approximately, thirty-four out of 70 original statistically significant discoveries were estimated to be false discoveries, which yields a false discovery proportion of about 0.49.

FDR estimators that utilize only published $p$-values are sometimes criticized because they must assume some selection model, and results may or may not be robust with respect to selection model misspecification. For example, \cite{Benjamini2014b, Goodman2014, Ioannidis2014, Gelman2014b} all noted that the FDR estimator in \cite{Jager2014}, which assumed that the publication probability is constant for $p<0.05$ and 0 otherwise, did not account for the fact that only p-values from the abstracts were mined.  Furthermore, \cite{Benjamini2014b} noted that selecting $p$-values using ``$p\leq 0.05$'' rather than ``$p<0.05$'' caused the FDR estimate to shift from 0.14 to 0.20. The selection model assumed in \cite{Johnson2017} stipulates that $p$-values less than 0.052 are published with some probability, say $c$, and that $p$-values greater than 0.052 are published with probability $c\times 0.006$.  Although more flexible than the standard ``$p<0.05$'' rule, this stipulation may not still be reasonable. Indeed it stipulates that $p = 0.051$ is 167 times more likely to be published than $p=0.053$.  However, perhaps this assumption has minimal impact on the FDR estimate; perhaps it does not. In general, our ability to assess, diagnose and address the health of our scientific ecosystem with published p-values relies, at least in part, on the answer to these types of questions. This manuscript is aimed at providing such answers. Specifically, we study the effect of publication bias model misspecification on FDR estimators. 

To facilitate a rigorous study and to reach a broad audience, this manuscript focuses on a simple FDR estimator, which is motivated under a simple publication bias model.  The estimator, formally introduced within this context in \cite{Hung2020}, first adjusts published p-values via $p_i' = p_i/\alpha$ and then applies a well-established method of moments estimator to the adjusted $p$-values via 
\begin{equation}\label{pi0.hat} \widehat{\pi}_0 = \frac{1}{m}\sum_{i=1}^m \frac{1_{(\lambda,1]}(p'_i)}{(1-\lambda)},
\end{equation}
where $\lambda$ is a tuning parameter and $1$ is the indicator function. The p-value adjustment can be motivated using selective inference theory \citep{Fithian2014} if we assume the selection probability is constant when $p$-values are less than $\alpha$ and 0 otherwise. The estimator, introduced in \cite{Schweder1982} as an estimate for the proportion of uniformly distributed $p$-values (or null $p$-values), has been well studied in the FDR and multiple testing literature.  For example, it is known to be positively biased when $p$-values are uniformly distributed under the null hypotheses.  Further, the bias decreases to 0 as the powers of the individual tests increase to 1. Additionally, the bias and variance of the estimator depend upon $\lambda$ and hence can be ``tuned.'' See \cite{Storey2002,Storey2004,  Blanchard2009}. For more recent studies see \cite{Gao2025} and references therein.   

This manuscript provides general expressions for the bias and variance of the estimator under arbitrary selection models. It is shown that the properties of the FDR estimator (e.g., positive bias that decreases in $\lambda$ and power) are retained if the selection model used in $p$-value adjustment is correct, but need not be retained otherwise. Closed-form expressions for the bias and variance of the FDR estimator under some common parametric publication bias models are also derived and explored. For example, we see that the bias can be negative and does not need to be monotone in $\lambda$ when the selection model is misspecified. 

To validate theoretical results in practice, we demonstrate that bias expressions generally account for discrepancies between the proposed FDR estimator, which uses only originally published p-values, and the gold standard estimate in \cite{Johnson2017} that used both original and replication p-values. Thus the proposed estimator, along with bias adjustments, can potentially allow for FDR estimation that is adjusted for publication bias using only selected $p$-values under some selection model. We additionally illustrate a sensativity analysis on p-values mined from the abstracts of medical journals in \cite{Jager2014}.  In particular, expressions allow for FDR estimates to be reported across a range of potential publication bias models. It is anticipated that such an analysis may inform policy makers who aim to diagnose and, if necessary, address replicability issues arising from publication bias. 

\section{Models}

Assume there exists a complete set of $p$-values $\left\{P_i\right\}_{i=1}^{m}$ to test null hypothesis $\{H_{i}\}_{i=1}^m$. Following the usual convention, we let $H_{i}$ be either 0 or 1 if the corresponding null is true or false, respectively. The following p-value mixture model is commonly used in the FDR literature \citep{Genovese2002, Storey2003} and throughout this manuscript:
\begin{definition}[P-value Mixture Model]\label{MM}
\normalfont Let $\{(H_i, P_i)\}_{i=1}^{m}$ be $i.i.d.$ bivariate random vectors from the joint distribution satisfying $H_i\sim\text{Bernoulli}(1-\pi_{0})$ and
\[
\Pr(P_i\leq t|H_i)= (1-H_{i})F_0(t)+H_{i}  F_1(t)=(1-H_{i}) t+H_{i} F_1(t),  
\]
where $\pi_0\in (0,1)$ and $F_{1}(t)\geq t$ for all $t$ with the inequality being strict for some $t$.
\end{definition}
\noindent In Definition \ref{MM}, $P_{i}$ has a uniform distribution under the null hypothesis (i.e., $H_{i}=0$) with the null probability $\pi_0$. The condition on $F_1(t)$ ensures that $p$-values tend to be smaller when the null hypothesis is false.
It is common to assume that $P_i$ is generated from a Beta distribution when $H_i$ is false, which leads to the specific model:  
\begin{equation}\label{betamixture}
F(t)=\pi_0 t +(1-\pi_0) t^{\gamma}\quad \text{for}\quad 0\leq\gamma<1. 
\end{equation}
Such a class of models has also been used to study the FDR under various publication policies \citep{Jager2014, Benjamin2017, Lakens2018, Habiger2022}.    

As discussed in the Introduction, in practice only a portion of the generated $p$-values are published, and perhaps an even smaller fraction is observed in further selection.  For readability, we refer to an observed $p$-value as ``selected'' and denote the selection event by $\delta_i$, where $\delta_i = 1$ if $P_i$ is observed and 0 otherwise.  Formally, a selection probability model is a statistical model for  $\Pr(\delta_i=1|P_i=p)$.  For example, the ``Publish if $P\leq 0.05$'' policy formally assumes $\Pr(\delta_i = 1|P_i=p) = c \cdot 1_{[0,0.05]}(p)$, i.e $P_i$ is selected with probability $c$ if it is less than or equal to 0.05 and not selected otherwise.  Alternatively, $\Pr(\delta_i = 1|P_i=p) = c$ assumes selection is independent of $P_i$.  Such a model may be reasonable in pre-registered studies, where the decision to publish is made prior to $p$-value generation \citep{Nosek2014a}.  The formal definition is provided below.  

\begin{definition}[Selection Probability Model] \label{def:PPM}
\normalfont A selection probability model (SPM) is a non-increasing function $\mu:[0,1]\rightarrow [0,\infty)$ defined by \begin{equation}
     \mu(p)\propto \Pr(\delta=1|P=p) \label{ppm}
\end{equation}
satisfying $E_{F}[\mu(P)]<\infty$ where $E_{F}[\mu(P)]$ denotes the expectation with respect to the distribution $F$ of $P$. 
\end{definition}
\noindent Note that $\mu(p)$ may be a continuous function, a step function or even piecewise continuous. The important assumption is that the selection probability must be non-increasing in $p$ so that, in general, larger $p$-values are not more likely selected. The assumption $E_{F}[\mu(P)]<\infty$ allows for continuous SPMs to be proportional to a density function, which will be useful in calculations.  

Let us more carefully consider some examples before proceeding.  \cite{Jager2014} assumed $\mu(p) = 1_{[0,0.05)}(p)$ in their estimate of the science-wise FDR with $p$-values mined from the abstracts of medical journals while \cite{Benjamini2014b} studied $\mu(p) = 1_{[0,0.05]}(p)$.  \cite{Hung2020} also considered the former model in their calibration of original $p$-values from the \cite{OSC2015} replication study while   \cite{Johnson2017} considered
\begin{equation}\label{johnsonrule}
     \mu(p)= 1_{[0, 0.052)}(p)+ 0.006\times 1_{[0.052, 1]}(p),
\end{equation}
in their analysis of the original $p$-values. \cite{Moss2023} used 
\begin{equation}
    \mu(p)=1_{[0, 0.025)}(p)+0.7\times 1_{[0.025, 0.05)}(p)+0.1\times 1_{[0.05, 1]}(p), \nonumber
\end{equation}
to study p-hacking. On the other hand, \cite{citkowicz2017parsimonious} assumed 
\begin{equation}
    \mu(p)\propto p^{a-1}(1-p)^{b-1},\nonumber
\end{equation}
for $a,b>0$. For a more systematic review of SPMs, we refer to \cite{Jin2015} and \cite{Hedges1984}. 

Under Definitions \ref{MM}-\ref{def:PPM} the probability density function (pdf) for a selected $p$-value is denoted
\begin{equation}\label{pubppro}
f(t|\delta_i=1)=\frac{\mu(t)f(t)}{E_F[\mu(P)]},
\end{equation}
or 
\begin{equation}
f_j(t|\delta_i=1)=\frac{\mu(t)f_j(t)}{E_{F_j}[\mu(P)]},\nonumber
\end{equation}
under $H_i = j$ for $j=0, 1$, where $f(t)$ is the pdf of original p-values before selection, $f_0(t)$ and $f_1(t)$ are the pdfs of original p-values under null and alternative, respectively. By Definition \ref{MM}, $f_0(t|\delta_i = 1)$ can be shown to be the pdf for a Uniform(0,$\alpha$] distribution if $\mu(p) = 1_{[0,\alpha](p)}$. On the other hand, if $\mu(p)=c$, then it can be verified that $f_0(t|\delta_i=1)$ is the pdf of a Uniform(0,1] distribution. Clearly the distribution of $p$-values depends on the SPM $\mu(p)$. See \cite{Fithian2014} or \cite{dear1992approach} for a formal treatment of post-selection densities in the selective inference and publication bias literature, respectively. 

\section{Post-Selection False Discovery Rate}
Before defining the post-selection FDR, it is important to point out that there is active research aimed at defining a flexible \textit{selection} mechanism that provides FDR control when applied to fully observed data (cf. \cite{Lei2018, Benjamini2014, Lei2020, Barber2015}). In general, post-selective inference aims to adjust inference for parameters after a \textit{known} selection mechanism has been employed. For example, see \cite{Benjamini2005, Yekutieli2012, Lee2016, Fithian2014, Bao2024}.  However, in the current study, the selection mechanism is not known and the complete set of $p$-values is not observable. Post-selective inference with an unknown selection mechanism is primarily studied in publication bias literature, with a focus on effect size estimation. See, for example, \cite{Hedges1984,Hedges1992, McShane2016,Moss2023}. However, little attention is given to the FDR.

To define the post-selection FDR, recall that the classical FDR \citep{Benjamini1995} is $E[V/R|R>0]\Pr(R>0)$ where $V = \sum_{i=1}^m (1-H_i)1_{[0,\alpha]}(P_i)$ is the number of false discoveries and $R = \sum_{i=1}^m1_{[0,\alpha]}(P_i)$ is the total number of discoveries. More mathematically tractable versions of the FDR, such as the marginal or positive false discovery rate defined $\text{mFDR} = E[V]/E[R]$ and $\text{pFDR} = E[V/R|R>0]$, are often used to study and develop FDR methods \citep{Storey2002, Genovese2002, Efron2004, Sun2007}. We shall follow suit. In particular, Theorem 1 in \cite{Storey2003} which gives $\text{pFDR} = \Pr(H_i=0|P_i\leq \alpha)$ under the mixture model in Definition \ref{MM}, inspires a mathematical tractable definition of the post-selection FDR presented below.  
\begin{definition}[Post-Selection False Discovery Rate]\label{def:FDR}
\normalfont Let $\{P_i\}_{i=1}^m$ be generated as in Definition \ref{MM} and $\mu$ be an SPM as in Definition \ref{def:PPM}. The post-selection FDR is defined
\begin{equation}
\text{FDR} = E\left[\frac{\sum_{i=1}^{m}(1-H_i)\delta_i 1_{[0,\alpha]}(P_i)}{\sum_{i=1}^{m}\delta_i1_{[0,\alpha]}(P_i)}\bigg\rvert\sum_{i=1}^m\delta_i1_{[0,\alpha]}(P_i)>0\right]. \nonumber
\end{equation}
\end{definition}
\noindent Observe that when $\delta_i=1$ for all $i$, FDR reduces to $\text{pFDR} = E[V/R|R>0]$.  However, when selection occurs so that $\delta_i = 1$ for only some $i$, the FDR in Definition \ref{def:FDR} is the expected proportion of false discoveries among all \textbf{selected} discoveries, where ``selected'' refers to $\delta_i=1$ and false discovery indicates $P_i\leq \alpha$ while $H_i=0$. Note that more accurate description of the FDR in Definition \ref{def:FDR} would be the ``positive post-selection FDR."  However, since other definitions of the post-selection FDR are not considered in this manuscript, we interchangeably use ``FDR'' or ``post-selection FDR'' and omit ``positive'' throughout this paper.        
The Bayesian-FDR connection in Theorem 1 in \cite{Storey2003} can be easily generalized to relate the post-selection FDR above to a posterior probability. The proof follows the approach in \cite{Storey2003} and is included in the Appendix. 
\begin{theorem}\label{bayes}
\normalfont Assume $H_i=0$ is rejected if $P_i\leq \alpha$ for some $\alpha \in (0,1)$ and let $\delta_i = 1$ denote the event that $P_i$ is selected. Under models in Definitions \ref{MM}-\ref{def:PPM} and the FDR in Definition  \ref{def:FDR},
\begin{align}
\text{FDR}=\Pr(H_i=0|\delta_i=1, P_i\leq\alpha)\nonumber.
\end{align} 
\end{theorem}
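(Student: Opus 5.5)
The plan is to follow the argument of Theorem 1 in \cite{Storey2003}, applied to the i.i.d.\ triples $(H_i,P_i,\delta_i)$. Write $D_i=\delta_i 1_{[0,\alpha]}(P_i)$ for the indicator that hypothesis $i$ is a selected discovery, and let $R^s=\sum_{i=1}^m D_i$. Under Definitions \ref{MM}--\ref{def:PPM}, the triples are i.i.d.: $\delta_i$ is drawn given $P_i$ with probability $\Pr(\delta_i=1|P_i=p)\propto\mu(p)$, independently across $i$. Hence the pairs $(H_i,D_i)$ are also i.i.d. I will also need $\Pr(D_i=1)>0$, so that the conditioning in Definition \ref{def:FDR} is well defined. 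This holds because $\mu$ is non-increasing and not identically zero, so it is positive near $0$; the corner case where $\mu$ vanishes on $[0,\alpha]$ would make the FDR undefined, and I will simply exclude it.

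The first step is to condition on $R^s=k$ for $k=1,\dots,m$:
\[
E\Big[\tfrac{\sum_i(1-H_i)D_i}{R^s}\,\Big|\,R^s=k\Big]=\frac1k\sum_{i=1}^m E[(1-H_i)D_i\,|\,R^s=k].
\]
By exchangeability of the i.i.d.\ pairs, $E[(1-H_i)D_i|R^s=k]$ does not depend on $i$. It equals $\Pr(H_i=0|D_i=1,R^s=k)\Pr(D_i=1|R^s=k)$, and $\Pr(D_i=1|R^s=k)=k/m$.

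The second step, which is the only delicate point, is to show that $\Pr(H_i=0|D_i=1,R^s=k)=\Pr(H_i=0|D_i=1)$. Given $D_i=1$, the event $R^s=k$ is the same as $\sum_{j\ne i}D_j=k-1$. That event depends only on $\{(H_j,P_j,\delta_j)\}_{j\neq i}$, which is independent of $(H_i,D_i)$. Therefore the conditional probability factorizes, and the conditioning on $R^s=k$ drops out.

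Combining the two steps, each term of the sum equals $\Pr(H_i=0|D_i=1)\cdot k/m$. Summing over $i$ and dividing by $k$ gives $\Pr(H_i=0|D_i=1)$ for every $k\ge1$. Averaging over the distribution of $R^s$ conditional on $R^s>0$ then yields
\[
\text{FDR}=\Pr(H_i=0|\delta_i=1,P_i\le\alpha).
\]
Optionally, this can be made explicit through \eqref{pubppro} as
\[
\pi_0\int_0^\alpha\mu(t)\,dt\Big/\int_0^\alpha\mu(t)f(t)\,dt .
\]
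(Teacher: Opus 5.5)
Your proposal is correct and follows the same route as the paper. Both adapt Storey's Theorem 1 by replacing the rejection indicator with $\omega_i=\delta_i 1_{[0,\alpha]}(P_i)$ (your $D_i$), conditioning on the number of selected discoveries $R=k$, and showing $E[V\mid R=k]=k\Pr(H_i=0\mid \omega_i=1)$; your exchangeability-and-factorization step (given $D_i=1$, the event $R^s=k$ depends only on the other, independent triples) just makes rigorous the step the paper handles by assuming without loss of generality that $\omega_1=\cdots=\omega_k=1$.
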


The post-selection FDR estimator considered here and formalized in \cite{Hung2020} is the null proportion estimator in \eqref{pi0.hat} but is presented here for formality.
\begin{definition}[Post-Selection FDR Estimator]\label{hungmme}
\normalfont Let $\{p_{i}\}_{i=1}^{n'}$ be the set of selected $p$-values that are less than or equal to $\alpha$ for some $\alpha\in(0,1)$, and $\lambda$ be a fixed tuning parameter in $(0,1)$. Define   
\begin{equation}   \widehat{\text{FDR}}=\sum_{i=1}^{n'}\frac{1_{(\lambda,1]}(p_i')}{n'(1-\lambda)},\nonumber
\end{equation}
where $p_{i}'=p_{i}/\alpha$ is the adjusted $p$-value.
\end{definition}
\noindent A subtle but important point is that more than $n'$ many $p$-values may be selected by the SPM; here $n'$ is the number of selected $p$-values that are also less than or equal to $\alpha$. As in the  \cite{Storey2004}, it is also assumed that at least one selected $p$-value is ``discovered as significant'', i.e. less than $\alpha$, since otherwise FDR estimation is not of interest. 

\section{Bias and Variance Expressions}

Both the post-selection density \eqref{pubppro} and Theorem \ref{bayes} facilitate expressions for the bias and variance of the estimator in Definition \ref{hungmme} under an arbitrary SPM. Note that the bias of the estimator in Definition \ref{hungmme} is defined to be $\mathrm{Bias}(\widehat{\text{FDR}}) = E[\widehat{\text{FDR}}] - \text{FDR}$. In the following Theorem, we provide an expression of the bias in terms of SPM with the proof given in the Appendix.

\begin{theorem}\label{biasMME}
\normalfont Under models in Definitions \ref{MM}-\ref{def:PPM}, we have  
\begin{equation}\label{df:biasMME}
\text{Bias}(\widehat{\text{FDR}})=\frac{(\alpha-\lambda)^{-1}\alpha E_F[1_{[\lambda,\alpha]}(P)\mu(P)]-\pi_0E_{F_0}[1_{[0,\alpha]}(P)\mu(P)]}{E_{F}[1_{[0,\alpha]}(P)\mu(P)]}, 
\end{equation}
for fixed $\lambda\in[0,\alpha]$ and $\alpha \in (0,1]$.
\end{theorem}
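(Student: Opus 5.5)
The plan is to compute $E[\widehat{\text{FDR}}]$ and $\text{FDR}$ separately, each as a ratio of expectations against $\mu$, and then subtract.

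\textbf{Reading of $\lambda$.} The statement takes $\lambda\in[0,\alpha]$ and has the factor $(\alpha-\lambda)^{-1}\alpha$. I therefore read $\lambda$ as living on the original $p$-value scale. The adjusted threshold is then $\lambda/\alpha$, so the estimator in Definition \ref{hungmme} becomes
\[
\widehat{\text{FDR}}=\frac{\alpha}{\alpha-\lambda}\cdot\frac{1}{n'}\sum_{i=1}^{n'}1_{(\lambda,\alpha]}(p_i),
\]
using $1-\lambda/\alpha=(\alpha-\lambda)/\alpha$ and the fact that every $p_i\le\alpha$.

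\textbf{Step 1: conditional i.i.d. structure of the averaged $p$-values.} By Definition \ref{MM}, the triples $(H_i,P_i,\delta_i)$ are i.i.d., with $\delta_i$ drawn independently given $P_i$ with probability proportional to $\mu(P_i)$. Condition on the random index set $S=\{i:\delta_i=1,\,P_i\le\alpha\}$ with $|S|=n'>0$. The $P_i$ for $i\in S$ are then i.i.d. with density
\[
\frac{\mu(t)f(t)1_{[0,\alpha]}(t)}{E_F[1_{[0,\alpha]}(P)\mu(P)]}.
\]
This follows from \eqref{pubppro} after further conditioning on $P\le\alpha$, by the same factorization argument used in the proof of Theorem \ref{bayes}.

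\textbf{Step 2: expectation of the estimator.} Given $n'$, the average of the indicators has expectation equal to a single selected $p$-value's probability of landing in $(\lambda,\alpha]$. That probability is $E_F[1_{(\lambda,\alpha]}(P)\mu(P)]/E_F[1_{[0,\alpha]}(P)\mu(P)]$, and it does not depend on $n'$. Taking the outer expectation over $n'$ (conditional on $n'>0$, as assumed) therefore changes nothing, and multiplying by $\alpha/(\alpha-\lambda)$ gives $E[\widehat{\text{FDR}}]$. Because $F$ is continuous (uniform under the null and absolutely continuous under the alternative), the endpoint $\{\lambda\}$ has measure zero. So $(\lambda,\alpha]$ may be replaced by $[\lambda,\alpha]$, matching \eqref{df:biasMME}.

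\textbf{Step 3: the target FDR.} By Theorem \ref{bayes}, $\text{FDR}=\Pr(H_i=0\mid\delta_i=1,P_i\le\alpha)$. Bayes' rule with the joint law of $(H,P,\delta)$ gives
\[
\Pr(H_i=0,\ \delta_i=1,\ P_i\le\alpha)\propto \pi_0E_{F_0}[1_{[0,\alpha]}(P)\mu(P)]
\quad\text{and}\quad
\Pr(\delta_i=1,\ P_i\le\alpha)\propto E_F[1_{[0,\alpha]}(P)\mu(P)],
\]
with the same proportionality constant (the one linking $\mu$ to $\Pr(\delta=1\mid P)$), which cancels in the ratio. Subtracting this $\text{FDR}$ from $E[\widehat{\text{FDR}}]$ over the common denominator yields \eqref{df:biasMME}.

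\textbf{Main obstacle.} I expect Step 1 to need the most care. The set $S$ is random, so it must be justified that conditioning on $S$ (and on $n'>0$) leaves the retained $p$-values i.i.d. from the truncated post-selection density. The plan is to write the joint law of all $m$ triples, condition on the event $\{S=s\}$ for each fixed set $s$, and note that the law factorizes across indices because the triples are i.i.d. The remaining steps are direct computations.
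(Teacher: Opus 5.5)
Your proposal is correct and follows essentially the same route as the paper: it obtains $\text{FDR}=\pi_0E_{F_0}[1_{[0,\alpha]}(P)\mu(P)]/E_F[1_{[0,\alpha]}(P)\mu(P)]$ from Theorem \ref{bayes} and Bayes' rule, and it gets $E[\widehat{\text{FDR}}]=\frac{\alpha}{\alpha-\lambda}E_F[1_{(\lambda,\alpha]}(P)\mu(P)]/E_F[1_{[0,\alpha]}(P)\mu(P)]$ by rewriting the estimator on the original scale and conditioning on the number of selected $p$-values below $\alpha$. Your explicit conditioning on the index set $S$ (giving i.i.d.\ truncated post-selection draws) and your justification for replacing $(\lambda,\alpha]$ by $[\lambda,\alpha]$ via the existence of a density are details the paper's proof passes over silently, so they tighten the argument without changing it.
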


To better understand the expression, suppose $p$-values follow the mixture model \eqref{betamixture} and that the ``$p\leq \alpha$'' selection mechanism is employed, i.e. $\mu(p)\propto 1_{[0, \alpha]}(p)$. This SPM and equation \eqref{df:biasMME} gives
\begin{align}
    \mathrm{\text{Bias}}(\widehat{\text{FDR}})&=\frac{\alpha}{\alpha-\lambda}\frac{\pi_0(\alpha-\lambda) + (1-\pi_0)(\alpha^\gamma-\lambda^\gamma)}{\pi_0\alpha + (1-\pi_0)\alpha^\gamma}-\frac{\pi_0 \alpha}{\pi_0 \alpha + (1-\pi_0)\alpha^\gamma} \label{eq:biassimple} \\
    &=\frac{\alpha}{\alpha-\lambda}\frac{(1-\pi_0)(\alpha^\gamma-\lambda^\gamma)}{\pi_0\alpha + (1-\pi_0)\alpha^\gamma}>0, \nonumber
\end{align}
where the last inequality is satisfied since  $\alpha^\gamma - \lambda^\gamma>0$ for $\gamma\in(0,1)$, $\alpha>\lambda$ and $\pi_0<1$. Further, noting that the power is defined as $\Pr(P_i\leq \alpha|H_i=1)= F_1(\alpha) = \alpha^\gamma$, we recover the least favorable configuration for the FDR \citep{Finner2009, Habiger2017a} by taking $\gamma = 0$, i.e, $p$-values vary according to a Dirac $\delta$ distribution with point mass at 0 when the null hypothesis is false and are uniformly distributed otherwise.  Observe $\text{Bias}(\widehat{\text{FDR}}) =0$ when $\gamma=0$. In summary, the estimator is unbiased when the power is 1 and is positively biased otherwise. These results may be expected since the $p$-value calibration model and selection model agree in these calculations. 

To explore settings when the calibration model needs not agree with the selection model, consider the step function SPM defined 
\begin{equation}\label{cs}
\mu(p)=1_{[0, \frac{\alpha}{2})}(p)+\rho 1_{[\frac{\alpha}{2}, \alpha]}(p), 
\end{equation}
where $\rho\in (0, 1]$. Observe that when $\rho=1$ the $p\leq \alpha$ rule is recovered.  Otherwise the $p$-values between $\alpha/2$ and $\alpha$ are less likely selected selected than $p$-values less than $\alpha/2$.  \cite{Moss2023} and \cite{moss2022infinite} suggest that this SPM may account for ``two-tailed'' vs ``one-tailed'' p-value reporting in practice. Using \eqref{df:biasMME}, we can verify that the bias of the $\widehat{\text{FDR}}$ under the SPM \eqref{cs} is given as
\begin{equation}
\text{Bias}(\widehat{\text{FDR}}) = \begin{cases} 
G_1(\alpha,\lambda,\pi_0,\gamma,\rho)/H(\alpha,\pi_0,\gamma,\rho) & \text{if } \lambda \geq \alpha/2, \\
G_2(\alpha,\lambda,\pi_0,\gamma,\rho)/H(\alpha,\pi_0,\gamma,\rho) & \text{if }\lambda < \alpha/2,
\end{cases}\label{cstep}
\end{equation}
where 
\[
\begin{aligned}
G_1(\alpha,\lambda,\pi_0,\gamma,\rho)=&\alpha[\pi_0\rho + (1-\pi_0)\rho(\alpha^\gamma-\lambda^\gamma)/(\alpha-\lambda)-\pi_0(1+\rho)/2], \\    
G_2(\alpha,\lambda,\pi_0,\gamma,\rho)=&\alpha\bigg[ \frac{\pi_0}{\alpha - \lambda} \left\{ (\rho+1)\frac{\alpha}{2}-\lambda \right\} +\frac{(1-\pi_0)}{\alpha - \lambda} \left\{ \rho\left(\alpha^\gamma-\frac{\alpha}{2}^\gamma\right) + \left( \frac{\alpha}{2} \right)^\gamma-\lambda^\gamma \right\}  \\
&-\pi_0(1+\rho)/2\bigg],\\
H(\alpha,\pi_0,\gamma,\rho)=& \frac{ \alpha\pi_0(1+\rho) }{2}+(1-\pi_0) \left[ \left(\frac{\alpha}{2}\right)^\gamma+\rho \left\{ \alpha^\gamma-\left(\frac{\alpha}{2}\right)^\gamma \right\} \right].
\end{aligned}
\]

The expression above readily allows for the relationship between the bias and SPM to be explored.  In particular, Figure \ref{stepSPM} plots the bias in \eqref{cstep} over $\rho$ and power for $\lambda=0.01, 0.025, 0.045$ and $\pi_{0}=0.3, 0.5, 0.8$ when $\alpha=0.05$. Observe the bias can be positive or negative. In particular, for small $\rho$ and $\lambda = \alpha/2$ or $\lambda>\alpha/2$, the bias is negative, especially when prior null probability $\pi_0$ is large. This is expected since the selection model leads to fewer $p$-values between $\alpha/2$ and $\alpha$ being observed, and the adjusted $p$-value $p_i' = p_i/\alpha$ does not take this selection into consideration. For $\lambda < \alpha/2$, the bias may still be positive but is decreasing in $\rho$. When $\pi_0$ is small and the FDR is small, some negative bias may exist, but is minimal. This may be expected since little negative bias is possible when the FDR is already near 0.  We also observe that when the presumed selection model is correct ($\rho = 1$) and $p$-values are generated according to a least favorable distribution ($\gamma = 0$) then the bias is 0 as expected.

\newpage
\begin{figure}[H]
    \includegraphics[width=1.0\textwidth, height = 15cm]{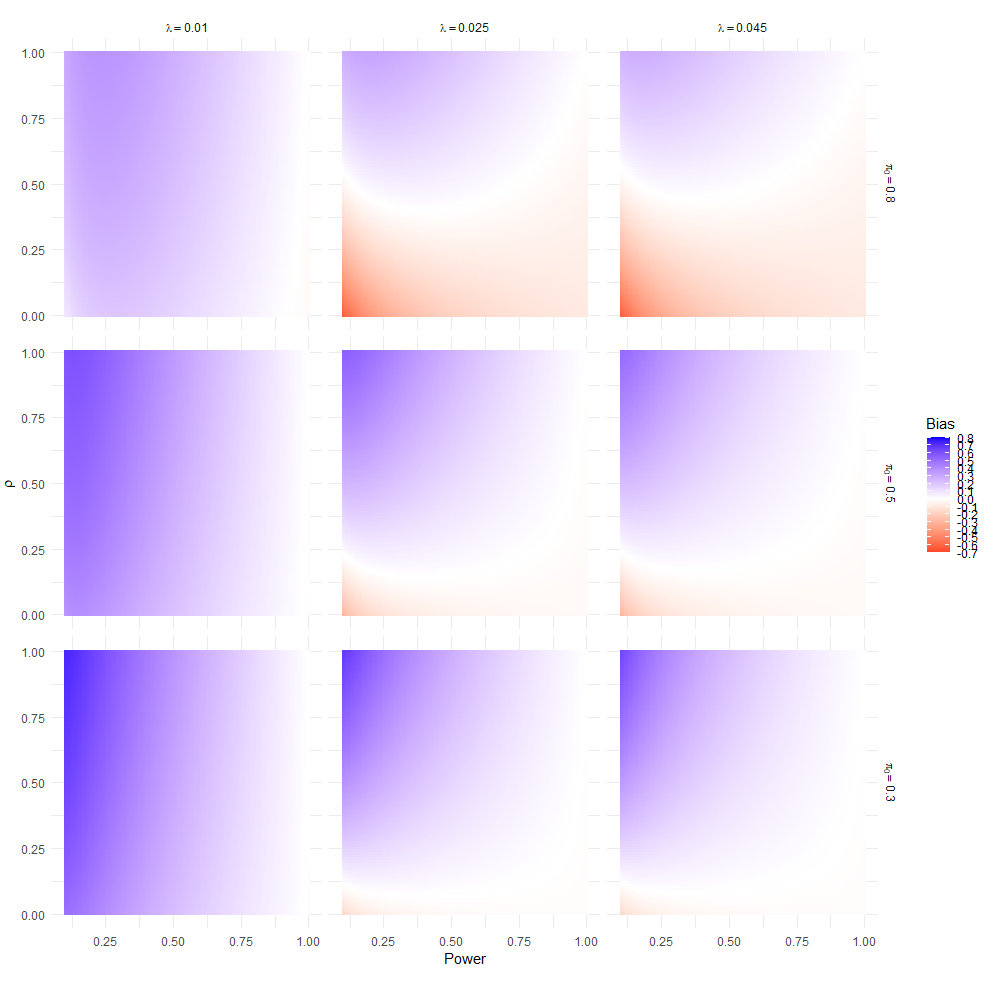}
    \caption{Bias in \eqref{cstep} over $\rho$ and power (corresponding to $\gamma$) at $\lambda = 0.01,0.025,0.045$ and $\pi_{0}=0.3,0.5,0.8$ when $\alpha=0.05$}
    \label{stepSPM}
\end{figure}

\newpage
Next, let us consider an absolutely continuous SPM 
\begin{equation}\label{spbeta}
    \mu(p)= (1-p)^{\eta-1},
\end{equation}
where $\eta\geq 1$. The fact that \eqref{spbeta} is the kernel of the Beta distribution with parameters $1$ and $\eta$ facilitates computation. Note that no publication bias exists when $\eta = 1$ since $\mu(p) = 1$ while $\mu(p)$ is large for small $p$ when $\eta$ is large. \cite{mathur2024sensitivity} showed that the model \eqref{spbeta} can accommodate more complex selection mechanisms by including various of densities of Beta distribution. The bias of the $\widehat{\text{FDR}}$ under the SPM \eqref{spbeta} is provided as
\begin{equation}
\text{Bias}(\widehat{\text{FDR}})=\frac{G(\alpha,\lambda,\pi_0,\gamma,\eta)}{K(\alpha,\pi_0,\gamma,\eta)}, \label{eq:biasbeta}
\end{equation}
where 
\[
\begin{aligned}
G(\alpha,\lambda,\pi_0,\gamma,\eta)=&\frac{\alpha}{\alpha-\lambda}\left[\frac{\pi_0}{\eta} \{ (1-\lambda)^{\eta}-(1-\alpha)^{\eta} \} +(1-\pi_0)\gamma \{ B(\alpha; \gamma, \eta)-B(\lambda; \gamma, \eta) \} \right] \\
&-\frac{\pi_0}{\eta} \{ 1-(1-\alpha)^{\eta} \}, \\
K(\alpha,\pi_0,\gamma,\eta)=&\frac{\pi_0}{\eta}\{1-(1-\alpha)^{\eta}\}+(1-\pi_0)\gamma B(\alpha; \gamma, \eta),
\end{aligned}
\]
and $B(x; a,b)=\int_0^x t^{a-1}(1-t)^{b-1}dt$ is the incomplete Beta function for $\gamma\in(0,1)$ and $\eta\in [1,\infty)$. 

Again, we can easily explore the relationship between the bias and the SPM using this expression. Observe, when $\gamma = 0$ and $\eta = 1$ we recover 0 bias.  This is expected because no selection occurs and consequently $P_i'$ still vary according to a least favorable configuration.  Figure \ref{betaspm} illustrates that bias can again be positive or negative (recall Figure \ref{stepSPM}). Negative bias for some larger $\eta$ is observed, especially when the power is low and $\pi_0$ large. Note that \cite{Johnson2017} suggest $\hat{\pi}_0 = 0.93$ with the estimated power $0.75$ while \cite{Schneck2023} estimates power as low as 0.233. If indeed $\pi_0$ is high and power is low, as suggested by these estimates, the FDR is then at risk of being substantially underestimated under \eqref{spbeta}. For example, for $\pi_0=0.93$, power=$0.233$, $\eta=35$, $\lambda=0.045$ and $\alpha=0.05$, the bias is -0.359.

\newpage
\begin{figure}[H]
    \centering
    \includegraphics[width=1.0\textwidth, height = 15cm]{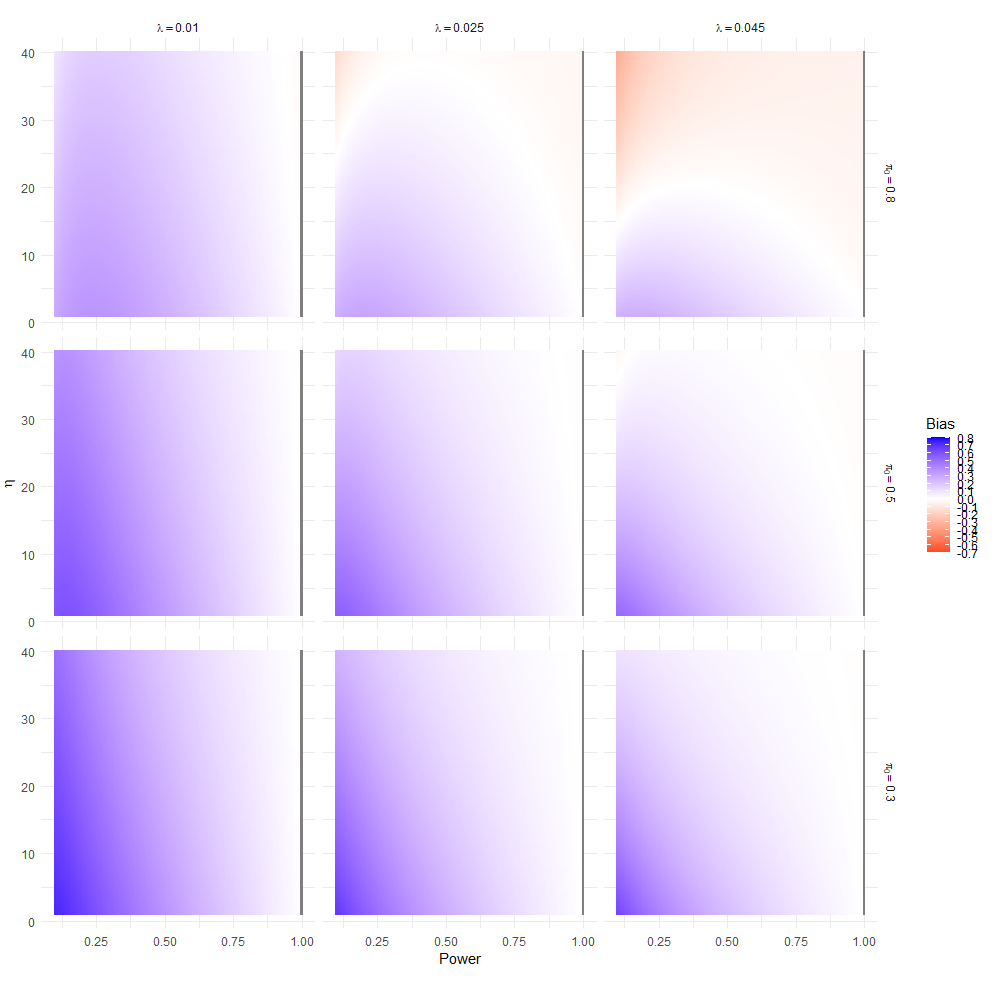}   
    \caption{Bias in \eqref{eq:biasbeta} over $\eta$ and power (corresponding to $\gamma$) at $\lambda = 0.01,0.025,0.045$ and $\pi_{0}=0.3,0.5,0.8$ when $\alpha=0.05$}
    \label{betaspm}
\end{figure}

The exact expression for the variance of $\widehat{\text{FDR}}$ is computationally cumbersome since it involves multinomial probability calculations, which require large factorials when $n$ is large. However, an asymptotic approximation based on the multivariate normal approximation of the multinomial distribution is available and may be useful. It is provided below.  

\begin{theorem}
\label{varMME}
\normalfont
Under models in Definitions \ref{MM}-\ref{def:PPM}, for large enough $n$, the variance of $\widehat{\text{FDR}}$ in Definition \ref{hungmme} can be approximated as
\begin{equation} 
    \text{Var}(\widehat{\text{FDR}})\approx \frac{\alpha^2}{(\alpha-\lambda)^2}\frac{q_1q_2}{n(q_1+q_2)^3}=O(n^{-1}),\label{eq:varMME}
\end{equation}
where $q_1=E_F[\mu(P)]^{-1} E_F[\mu(P)1_{[0,\lambda]}(P)]$, $q_2=E_F[\mu(P)]^{-1}E_F[\mu(P)1_{[\lambda,\alpha]}(P)]$ and $n$ is the total number of published p-values between 0 and 1.
\end{theorem}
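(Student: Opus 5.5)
The plan is to reduce $\widehat{\text{FDR}}$ to a smooth function of two multinomial cell proportions and then apply the delta method via the multivariate normal approximation to the multinomial distribution.

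\textbf{Setting up the multinomial.} Let $p_1,\dots,p_n$ denote all $n$ published $p$-values in $[0,1]$. Under Definitions \ref{MM}--\ref{def:PPM} these are i.i.d.\ with the post-selection density \eqref{pubppro}. I would partition $[0,1]$ into $[0,\lambda]$, $(\lambda,\alpha]$ and $(\alpha,1]$, and let $N_1,N_2,N_3$ be the corresponding counts. Then $(N_1,N_2,N_3)\sim\text{Multinomial}(n;q_1,q_2,q_3)$. Integrating \eqref{pubppro} over the first two cells gives exactly the $q_1$ and $q_2$ of the statement; the endpoint convention at $\lambda$ is irrelevant for a continuous $F$.

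\textbf{Rewriting the estimator.} In Definition \ref{hungmme} we have $n'=N_1+N_2$. I would use the same scale convention as in Theorem \ref{biasMME}, where $\lambda\in[0,\alpha]$ is the threshold on the original $p$-value scale, so the factor $(1-\lambda)^{-1}$ on the adjusted scale becomes $\alpha/(\alpha-\lambda)$. The estimator then reads
\[
\widehat{\text{FDR}}=\frac{\alpha}{\alpha-\lambda}\, g\!\left(\frac{N_1}{n},\frac{N_2}{n}\right),\qquad g(x,y)=\frac{y}{x+y}.
\]
Checking this rewriting against Theorem \ref{biasMME} is a useful consistency step, since it should reproduce the leading term of \eqref{df:biasMME}.

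\textbf{Delta method.} By the multivariate CLT, $\sqrt n\,(N_1/n-q_1,\;N_2/n-q_2)$ is approximately normal with covariance
\[
\Sigma=\begin{pmatrix} q_1(1-q_1) & -q_1q_2\\ -q_1q_2 & q_2(1-q_2)\end{pmatrix}.
\]
The gradient of $g$ at $(q_1,q_2)$ is $\nabla g=(-q_2,\,q_1)/s^2$, where $s=q_1+q_2$. Expanding the quadratic form gives
\[
\nabla g^{\top}\Sigma\nabla g=\frac{q_1q_2^2(1-q_1)+q_1^2q_2(1-q_2)+2q_1^2q_2^2}{s^4}.
\]
The numerator simplifies to $q_1q_2(q_1+q_2)$, so the quadratic form equals $q_1q_2/s^3$. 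Multiplying by the constant $\alpha^2/(\alpha-\lambda)^2$ and by $1/n$ yields \eqref{eq:varMME}, and the rate $O(n^{-1})$ is immediate.

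\textbf{Main obstacle.} The delicate step is justifying the approximation, since $g$ is undefined when $N_1+N_2=0$ and the estimator is implicitly conditioned on $n'>0$. I would handle this as follows.
\begin{itemize}
\item Since $s>0$ (assuming $\mu$ is positive near $0$), $\Pr(N_1+N_2=0)=(1-s)^n$ decays exponentially.
\item The estimator is bounded, taking values in $[0,\alpha/(\alpha-\lambda)]$.
\item Consequently, conditioning on $n'>0$ perturbs the first two moments only by exponentially small terms.
\end{itemize}
Given this, the delta-method variance is the correct first-order approximation. Because the theorem is stated only as an approximation ``for large enough $n$'', I would present this as a remark rather than a full uniform-integrability argument.
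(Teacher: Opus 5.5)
Your proposal is correct and follows the paper's proof essentially step for step. Both use the same three-cell multinomial $(X_1,X_2,X_3)$ for the published $p$-values, rewrite $\widehat{\text{FDR}}$ as $\frac{\alpha}{\alpha-\lambda}$ times a ratio of cell counts (your $y/(x+y)$ is the paper's $1-X_1/(X_1+X_2)$), and apply the multivariate normal approximation with the delta method to obtain $q_1q_2/(q_1+q_2)^3$. Your explicit quadratic-form computation and your remark on the event $n'=0$ fill in details the paper omits; as you acknowledge, neither you nor the paper gives the uniform-integrability step needed to pass from the delta-method limiting variance to the actual variance.
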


\begin{rem}
\normalfont Since the bias depends on $\lambda$ (but not $n$) and the variance is negligible for large $n$ regardless of $\lambda$, it may be reasonable to focus on the bias expression when choosing $\lambda$ when $n$ is large.    
\end{rem}

\section{Application}

To illustrate how the proposed bias and variance expressions can facilitate a sensitivity analysis for FDR estimation, we apply them to the two real datasets studied in \cite{Johnson2017} and \cite{Jager2014}. We begin with \cite{OSC2015} from which \cite{Johnson2017} analyzed a selection of 73 published and corresponding replication p-values in psychology. Under \eqref{johnsonrule}, \cite{Johnson2017} proposed the two estimates of the null proportion: $\hat{\pi}_0=0.930$ and $0.886$, each obtained from a different model for effect sizes. With the estimated power of $0.750$, their FDR estimates were $0.479$ and $0.350$ based on $\hat{\pi}_0=0.930$ and $0.886$, respectively. Table \ref{tablejohnson1} summarizes the resulting $\widehat{\text{FDR}}$ in Definition \ref{hungmme} for each $\lambda$, along with the corresponding bias and standard error (SE) based on the originally published p-values using expressions \eqref{df:biasMME} and \eqref{eq:varMME}. 

First observe when $\lambda=0.010$, $\widehat{\text{FDR}}=0.442$ with bias of approximately 0.1 and SE of 0.073, which closely aligns with the results in \cite{Johnson2017}. For other reasonable values of $\lambda$, say $\lambda$ between the second and third quartile of the observed $p$-values, we also recover a bias correct $FDR$ estimate between 0.370 and 0.651.  

\begin{rem}
\normalfont Observe when $\lambda$ is greater than the 3rd quartile, however, the standard error is large and point estimates can be small.  This may be anticipated here since the number of observed $p$-values is small and the variance is large, especially when $\lambda$ is large. For example, only five of the 73 selected p-values are between 0.025 and 0.045 and none are between 0.045 and 0.052. While results in this manuscript assume $\lambda$ is fixed, in practice it may be reasonable to utilize the median of the selected $p$-values, or some other prespecified quantile, so as to avoid such erratic behavior.     
\end{rem}

\renewcommand{\arraystretch}{1.4}
\begin{table}[t]
    \centering
    \resizebox{\textwidth}{!}{
    \begin{tabular}{|c|c|c|c|c|c|c|c|c|}
    \hline
    & $\lambda$ & $0.004 \ (Q_2)$ & 0.010 & $0.012 \ (Q_3)$ & 0.015 & 0.025 & 0.035 & 0.045 \\
    \cline{2-9}
    & $\widehat{\text{FDR}}\rule{0pt}{2.8ex}$ & 0.495 & 0.442 & 0.281 & 0.181 & 0.138 & 0.044 & 0.000 \\
    \hline
    \multirow{2}{5em}{$\pi_{0}=0.930$} & $\text{Bias}$ & 0.125 & 0.096 & 0.089 & 0.083 & 0.069 & 0.060 & 0.054 \\
    & $\text{SE}$ & 0.064 & 0.073 & 0.077 & 0.082 & 0.103 & 0.138 & 0.227 \\
    \hline
    \multirow{2}{5em}{$\pi_{0}=0.886$} & $\text{Bias}$ & 0.156 & 0.119 & 0.111 & 0.104 & 0.086 & 0.075 & 0.067\\
    & $\text{SE}$ & 0.056 & 0.062 & 0.065 & 0.068 & 0.084 & 0.110 & 0.178 \\
    \hline    
    \end{tabular}
    }
    \\[4pt]
    \caption{\normalfont Table of $\widehat{\text{FDR}}$, bias and standard error computed with the $p$-values analyzed in \cite{Johnson2017} over each $\lambda$ for $\pi_{0}=0.930, 0.886$ under \eqref{johnsonrule}. $Q_{2}$ and $Q_{3}$ denote the second and third quartiles of the original p-values, respectively.}
\label{tablejohnson1}
\end{table}

In many applications calling for FDR estimation with selected $p$-values parameter estimates necessary to utilize bias and variance expressions in this manuscript are not readily available since replicated $p$-values may not be available. Here, $\widehat{\text{FDR}}$ in Definition \ref{hungmme}, expressions \eqref{df:biasMME} and \eqref{eq:varMME} facilitate FDR estimates and sensitivity analysis. Let us illustrate, recall mined p-values from five major medical journal abstracts between 2000 and 2010 in \cite{Jager2014} and reanalyzed in \cite{Schimmack2023}. Recall the FDR estimate in \cite{Jager2014} is $0.188$ for the \textit{Lancet} under the ``$p<0.05$'' SPM, i.e. the SPM in \eqref{cs} with $\rho=1.0$. We compute the $\widehat{\text{FDR}}$ and its corresponding bias with the same published $p$-values under the step function SPM \eqref{cs} and $\rho=1.0,0.8,0.6,0.4$, which is summarized in Table \ref{tableJL} assuming power $=0.8$. While each $\widehat{\text{FDR}}$ in Table \ref{tableJL} is larger than 0.188, when we subtract bias we again recover similar results for most SPMs and $\lambda$ choices. For example, $\widehat{\text{FDR}}-\text{Bias}(\widehat{\text{FDR}})=0.2602-0.0699=0.1903$ when $\lambda=0.035$, $\rho=1$. However, let us suppose that ``$p\leq 0.05$'' rule is not reasonable, but rather $\rho=0.4$. Now the bias is smaller, and even negative for some $\lambda$. Our expression allow us to report that (using $\lambda=0.035$) we may estimate $\widehat{\text{FDR}}-\text{Bias}(\widehat{\text{FDR}})\approx 0.19$ if $p<0.05$ rule reasonably approximate the true selection mechanism.  However if p-values $<2^{-1}\times0.05=0.025$ are $(0.4)^{-1}=2.5$ times more likely to be published than p-values between 0.025 and 0.05, then we may anticipate an FDR of $\widehat{\text{FDR}}-\text{Bias}(\widehat{\text{FDR}}) = 0.2602+0.0350=0.2952$. We suspect policy makers interested in FDR estimation may be more interested in several different FDR estimates alongside summaries of their corresponding SPMs, rather than one estimate under the $p<0.05$ rule only. This may be especially true if such estimates are not computationally complex. 

\renewcommand{\arraystretch}{1.2}
\begin{table}[t]
    \centering
    \resizebox{\textwidth}{!}{ 
    \begin{tabular}{|c|c|c|c|c|c|c|}
    \hline
      & $\lambda$ & $0.0100$ & $0.0150$ & $0.0250$ & $0.0350$ & $0.0450$\\
    \cline{2-7}
     & $\widehat{\text{FDR}}\rule{0pt}{2.8ex}$    & 0.3449  &  0.3430  & 0.2962  & 0.2602 & 0.2718 \\  
    \hline
    \multirow{4}{5.0em}{$\text{Bias}(\widehat{\text{FDR}})$} & $\rho = 1.0$ & 0.1130 & 0.0980 & 0.0805 & 0.0699 & 0.0625 \\
    & $\rho= 0.8$ & 0.1007 & 0.0802 & 0.0457 & 0.0370 & 0.0309 \\
    & $\rho= 0.6$ & 0.0878 & 0.0613 & 0.0088 & 0.0020 & -0.0026 \\
    & $\rho= 0.4$ & 0.0740 & 0.0413 & -0.0303 & -0.0350 & -0.0382 \\
    \hline
    \end{tabular}
    }
    \\[4pt]    
    \caption{\normalfont Table of $\widehat{\text{FDR}}$ and corresponding bias computed with the published $p$-values from \textit{Lancet} analyzed in \cite{Jager2014} over each $\lambda$ and $\rho$ when $\pi_{0}=0.8$ and power $=0.8$ under \eqref{cs}}
    \label{tableJL}
\end{table}

\section{Conclusion}
In this paper, we provided a definition of a selection probability model (SPM) and the resulting post-selective FDR. Closed-form expressions for the bias and variance of an FDR estimator under an arbitrary SPM were derived. The numerical results suggest that the estimator can be negatively biased for large $\pi_0$, especially when p-values much smaller than 0.05 are more likely to be published than $p$-values near 0.05. The expressions were validated in practice using the replication study from \cite{OSC2015}, where parameter estimates and a non-standard SPM from \cite{Johnson2017} were available. In practice, FDR estimators that use only selected $p$-values necessarily rely on  selection model assumptions, which are difficult to verify.  However, expressions provided here readily facilitate sensitivity analysis, so that policy makers and researchers can better gauge how selection model assumptions may impact FDR estimates.  

This manuscript focused on bias and variance expressions for a simple post-selective FDR estimator under generic and specific parametric SPMs. Our focus on a simple estimator allowed for closed-form expressions that facilitate a sensitivity analysis. It would be interesting to develop similar bias and variance expressions for more sophisticated estimators when SPMs may be misspecified, like maximum likelihood estimators in \cite{Jager2014}.

\bibliographystyle{imsart-nameyear} 
\bibliography{replicability}   

\appendix
\renewcommand{\theequation}{A.\arabic{equation}}
\setcounter{equation}{0}
\section{}

\begin{proof}[Proof of Theorem \ref{bayes}]
 It suffices to show that under Definition \ref{def:FDR}, mixture model in Definition \ref{MM} and SPM in Definition \ref{def:PPM} 
 $$\Pr(H_i = 0|\delta_i = 1, P_i\leq \alpha) = E\left[\frac{\sum_{i=1}^{m}(1-H_i)\delta_i 1_{[0,\alpha]}(P_i)}{\sum_{i=1}^{m}\delta_i1_{[0,\alpha]}(P_i)}\bigg\rvert\sum_{i=1}^m\delta_i1_{[0,\alpha]}(P_i)>0\right]$$
The proof is identical to the proof of Theorem 1 in \cite{Storey2003} but replaces $1_{[0,\alpha]}(P_i)$ with $\delta_i 1_{[0,\alpha]}(P_i)$ throughout. Letting $\omega_i = \delta_i 1_{[0,\alpha]}(P_i)$ for readability, 
first note that,
   \begin{align}
        \text{FDR} &= E\left[\frac{V}{R}|R>0\right]\nonumber\\
             &= \sum_{k=1}^m E[\frac{V}{R}|R=k]\Pr(R=k|R>0)\nonumber\\
             &= \sum_{k=1}^m E[\frac{V}{k}|R=k]\Pr(R=k|R>0), \label{b1}
    \end{align}
    where $V=\sum_{i=1}^m (1-H_i)\omega_i$, $R=\sum_{i=1}^m \omega_i$ and $(H_i, \omega_i)$ is an i.i.d. random vector. Thus, observe in (\ref{b1}), $V\sim \text{Binomial}[k, \Pr(H_i=0|\omega_i=1)]$ given that $R=k$. Thus, for the first term of \eqref{b1},

    \begin{align}
        E[V|R=k]&=E[\sum_{i=1}^m (1-H_i)\omega_i|\sum_{i=1}^m \omega_i=k]\nonumber\\
        &=E[\sum_{i=1}^m (1-H_i)\omega_i|\omega_1=\omega_2...=\omega_k=1,\omega_{k+1}=\omega_{k+2}=...=\omega_m=0]\label{b2}\\
        &=E[\sum_{i=1}^k (1-H_i)|\omega_1=\omega_2=...=\omega_k=1,\omega_{k+1}=\omega_{k+2}=...=\omega_m=0]\nonumber\\
        &=E[\sum_{i=1}^k (1-H_i)|\omega_1=\omega_2=...=\omega_k=1]\label{b3}
    \end{align}
    The equal sign in (\ref{b2}) is because $\omega_i$ could only be 0 or 1. So $\sum_{i=1}^m \omega_i=k$ implies that, without loss of generality, $\omega_1$ to $\omega_k$ are 1 and the rest are 0. 

    Then recall that in p-value mixture model in Definition \ref{MM}, $H_i\overset{\text{i.i.d.}}{\sim} \text{Bernoulli}(1-\pi_0)$. Observe that $\omega_i$ is also i.i.d. Bernoulli distributed. Thus (\ref{b3}) could be written
    \begin{align}
        E[\sum_{i=1}^k (1-H_i)|\omega_1=\omega_2=...=\omega_k=1]&=\sum_{i=1}^k E[(1-H_i)|\omega_i=1]\nonumber\\
        &=k\times \Pr(H_i=0|\omega_i=1),\label{b4}
        \end{align}
        for some $i\in\{1,2,...,k\}$. So plugging  (\ref{b4}) into  (\ref{b1}), we have
        \begin{align}
        \text{FDR} &= \sum_{k=1}^m E[\frac{V}{k}|R=k]\Pr(R=k|R>0)\nonumber\\
        &=\sum_{k=1}^m \frac{k\times \Pr(H_i=0|\omega_i=1)}{k}\Pr(R=k|R>0)\nonumber\\
        &=\Pr(H_i=0|\omega_i=1)\sum_{k=1}^m\Pr(R=k|R>0)\nonumber\\
        &=\Pr(H_i=0|\omega_i=1)\label{be}\\
        &=\Pr(H_i=0|\delta_i 1_{[0,\alpha]}(P_i)=1)\nonumber\\
        &=\Pr(H_i=0|\delta_i=1,P_i\leq\alpha).\nonumber
        \end{align}
        The equal sign in (\ref{be}) is because $R>0$ if and only if $R\in\{1,2,...,m\}$.

\end{proof}

\renewcommand{\thesection}{3}
\setcounter{section}{3} 
\setcounter{proposition}{1} 

\begin{proof}[Proof of Theorem \ref{biasMME}] 
First note that the $bias=E[\widehat{\text{FDR}}-\text{FDR}]=E[\widehat{\text{FDR}}]-\text{FDR}$, where $\text{FDR}=\Pr(H_i=0|\delta_i=1, P_i\leq\alpha)$ is a constant provided in Theorem \ref{bayes}. The Bayesian form of the $\text{FDR}$ can be expanded below:
\begin{align}
    \text{FDR}&=\Pr(H_i=0|\delta_i=1,P_i\leq\alpha)\nonumber\\
       &=\frac{\Pr(H_i=0)\Pr(\delta_i=1,P_i\leq\alpha|H_i=0)}{\Pr(\delta_i=1,P_i\leq\alpha)}\nonumber\\
       &=\frac{\Pr(H_i=0)E[\delta_i1_{[0,\alpha]}(P_i)|H_i=0]}{E[\delta_i 1_{[0,\alpha]}(P_i)]}\nonumber\\
        &=\frac{\Pr(H_i=0)E_{F_0}[\delta_i1_{[0,\alpha]}(P_i)]}{E[\delta_i 1_{[0,\alpha]}(P_i)]}\nonumber\\
       &=\frac{\Pr(H_i=0)E_{F_0}[E[\delta_i1_{[0,\alpha]}(P_i)|P_i]]}{E_{F}[E[\delta_i1_{[0,\alpha]}(P_i)|P_i]]}\nonumber\\
        &=\frac{\Pr(H_i=0)E_{F_0}[1_{[0,\alpha]}(P_i)E[\delta_i|P_i]]}{E_{F}[1_{[0,\alpha]}(P_i)E[\delta_i|P_i]]}\label{ci}\\
       &=\frac{\pi_0E_{F_0}[1_{[0,\alpha]}(P)\mu(P)]}{E_{F}[1_{[0,\alpha]}(P)\mu(P)]}. \label{muE}
\end{align}
Equation (\ref{ci}) holds because $1_{[0,\alpha]}(P_i)$ is a constant if $P_i$ is given. Equation (\ref{muE}) holds because $\mu(p)$ is defined to be $\Pr(\delta_i=1|P_i=p)$ which equals $E[\delta_i|P_i=p]$.
Therefore, to calculate the bias, it is sufficient to derive $E[\widehat{\text{FDR}}]$. Recall $\widehat{\text{FDR}}=\sum_{i=1}^{n'}\frac{1_{(\lambda',1]}(\frac{p_i}{\alpha})}{n'(1-\lambda')}$ for $\lambda'\in[0,1]$. Multiplying both the numerator and the denominator by $\alpha$, it is equivalent to $\widehat{\text{FDR}}=\sum_{i=1}^{n'}\frac{\alpha 1_{(\alpha\lambda',\alpha]}(P_i)}{n'(\alpha-\alpha\lambda')}$. Letting $\lambda=\alpha\lambda'\in[0,\alpha]$, we have
\begin{align}
    \widehat{\text{FDR}}&=\frac{\alpha}{\alpha-\lambda}\sum_{i=1}^{n'}\frac{ 1_{(\lambda,\alpha]}(P_i)}{n'}\nonumber\\
    &=\frac{\alpha}{\alpha-\lambda}\frac{\sum_{i=1}^m 1_{(\lambda,\alpha]}(P_i)\delta_i}{\sum_{i=1}^m 1_{[0,\alpha]}(P_i)\delta_i}.\label{nprhl}
\end{align}
for $\lambda\in[0,\alpha]$. Equation (\ref{nprhl}) holds because $n'$ is the number of observed/published p-values that are less than or equal to $\alpha$ and m is the total number of the p-values which is an unknown constant. Under mixture model in Definition \ref{MM} and SPM in Definition \ref{def:PPM}, the numerator and denominator of equation (\ref{nprhl}) are both Binomial random variables where $\sum_{i=1}^m 1_{(\lambda,\alpha]}(P_i)\delta_i\sim \text{ Binomial}(n, \Pr(\lambda<P_i\leq\alpha|\delta_i=1))$ and $\sum_{i=1}^m 1_{[0,\alpha]}(P_i)\delta_i\sim \text{ Binomial}(n, \Pr(P_i\leq\alpha|\delta_i=1))$. Based on this observation, the expectation of the estimator can be written below
\begin{align}
   E[\widehat{\text{FDR}}]&=\frac{\alpha}{\alpha-\lambda}E\left[\frac{\sum_{i=1}^m 1_{(\lambda,\alpha]}(P_i)\delta_i}{\sum_{i=1}^m 1_{[0,\alpha]}(P_i)\delta_i}\right]\nonumber\\
   &=\frac{\alpha}{\alpha-\lambda}E\left[E\left[\frac{\sum_{i=1}^m 1_{(\lambda,\alpha]}(P_i)\delta_i}{\sum_{i=1}^m 1_{[0,\alpha]}(P_i)\delta_i}\bigl\rvert\sum_{i=1}^m 1_{[0,\alpha]}(P_i)\delta_i=k \right] \right]\nonumber\\
   &=\frac{\alpha}{\alpha-\lambda}E\left[\frac{1}{k}E\left[\sum_{i\in K} 1_{(\lambda,\alpha]}(P_i)\delta_i\bigl\rvert\sum_{i=1}^m 1_{[0,\alpha]}(P_i)\delta_i=k \right] \right]\nonumber\\
   &=\frac{\alpha}{\alpha-\lambda}E\left[\frac{1}{k}k \Pr(\lambda<P_i\leq\alpha|P_i\leq\alpha,\delta_i=1) \right]\nonumber\\
   &=\frac{\alpha}{\alpha-\lambda} \Pr(\lambda<P_i\leq\alpha|P_i\leq\alpha,\delta_i=1),\nonumber\\
   &=\frac{\alpha}{\alpha-\lambda} \frac{\Pr(\lambda<P_i\leq\alpha,\delta_i=1)}{\Pr(P_i\leq\alpha, \delta_i=1)},\nonumber\\
   &=\frac{\alpha}{\alpha-\lambda} \frac{\frac{\Pr(\lambda<P_i\leq\alpha,\delta_i=1)}{\Pr(\delta_i=1)}}{\frac{\Pr(P_i\leq\alpha, \delta_i=1)}{\Pr(\delta_i=1)}},\nonumber\\
   &=\frac{\alpha}{\alpha-\lambda} \frac{\Pr(\lambda<P_i\leq\alpha|\delta_i=1)}{\Pr(P_i\leq\alpha| \delta_i=1)},\nonumber\\
   &=\frac{\alpha}{\alpha-\lambda} \frac{E_F[\mu(P)1_{(\lambda,\alpha]}(P)]}{E_F[\mu(P)1_{[0,\alpha]}(P)]},\nonumber
\end{align}
where $K = \{i\in \mathcal{M}: 1_{(0,\alpha]}(P_i)\delta_i=1\}$, for $\lambda\in[0, \alpha]$.
\end{proof}

\begin{proof}[Proof of Proposition \ref{varMME}]
Recall the MME \eqref{nprhl}
\begin{equation}
\widehat{\text{FDR}}=\frac{\alpha}{\alpha-\lambda}\big(1-\frac{X_1}{X_1+X_2}\big), \nonumber
\end{equation}
 where $X_1=\sum 1_{[0,\lambda]}(P_i)\delta_i$, $X_2=\sum 1_{(\lambda,\alpha]}(P_i)\delta_i$ and $X_3=\sum 1_{(\alpha,1]}(P_i)\delta_i$. Therefore $(X_1,X_2,X_3)'\sim \text{Multinomial}(n, (q_1,q_2,q_3)')$, where $n=X_1+X_2+X_3$ is the total number of published p-values, $q_1=\Pr(P_i\leq\lambda|\delta_i=1)=\frac{E_F[\mu(P)1_{[0,\lambda]}(P)]}{E_F[\mu(P)]}$, $q_2=\Pr(\lambda<P_i\leq\alpha|\delta_i=1)=\frac{E_F[\mu(P)1_{[\lambda,\alpha]}(P)]}{E_F[\mu(P)]}$ and $q_3=\Pr(\alpha<P_i\leq1|\delta_i=1)=\frac{E_F[\mu(P)1_{[\alpha,1]}(P)]}{E_F[\mu(P)]}$. Then we are able to calculate the variance:
\begin{align}
\text{Var}(\widehat{\text{FDR}})&=\text{Var}(\frac{\alpha(1-\frac{X_1}{X_1+X_2})}{\alpha-\lambda}) \nonumber\\
&=\frac{\alpha^2}{(\alpha-\lambda)^2}\text{Var}(1-\frac{X_1}{X_1+X_2})\nonumber\\
&=\frac{\alpha^2}{(\alpha-\lambda)^2}\text{Var}(\frac{X_1}{X_1+X_2}).\label{eq:57}
\end{align}
The variance of $\widehat{\text{FDR}}$ cannot be calculated directly by (\ref{eq:57}) due to the high computational cost when $n$ is large. However a normal approximation and multivariate Delta method can approximate the variance. Recall that $(X_1,X_2,X_3)'\sim \text{Multinomial}(n, (q_1,q_2,q_3)')$. Therefore, asymptotically, $\sqrt{n}\left(\left(\frac{X_1}{n}, \frac{X_2}{n}, \frac{X_3}{n}\right)'- \left(q_1, q_2, q_3\right)'\right)\sim N\left((0,0,0)', \boldsymbol{\Sigma}\right)$, where 
\begin{equation}
\boldsymbol{\Sigma}=\begin{pmatrix}
q_1(1-q_1) & -q_1q_2 & -q_1q_3\\
-q_2q_1 & q_2(1-q_2) & -q_2q_3 \\
-q_3q_1 & -q_3q_2    & q_3(1-q_3)
\end{pmatrix}.\nonumber
\end{equation}
Then by the Delta method, 
\begin{equation}
\sqrt{n}\left(\frac{X_1}{X_1+X_2}- \frac{q_1}{q_1+q_2}\right)\sim N\left(0, \frac{q_1q_2}{(q_1+q_2)^3}\right). \label{var:1}
\end{equation}
So plugging (\ref{var:1}) into (\ref{eq:57}) the approximated variance of $\widehat{\text{FDR}}$ when $n$ is large is
\begin{equation} 
\text{Var}(\widehat{\text{FDR}})\approx \frac{\alpha^2}{(\alpha-\lambda)^2}\frac{q_1q_2}{n(q_1+q_2)^3}. \nonumber
\end{equation}       
\end{proof}

\begin{corollary}
\normalfont
Under the SPM \eqref{cs}, the variance of the MME \eqref{nprhl} is 
\begin{equation}
     \text{Var}(\widehat{\text{FDR}})\approx \frac{\alpha^2}{(\alpha-\lambda)^2}\frac{q_1q_2}{n(q_1+q_2)^3},\nonumber
\end{equation}
where $E_F[\mu(P)]=\pi_0(\rho+1)\alpha/2+(1-\pi_0)\{\rho[\alpha^\gamma-(\alpha/2)^\gamma]+(\alpha/2)^\gamma\}$, \\and $q_1=\frac{\pi_0[\alpha/2+\rho(\lambda-\alpha/2)]+(1-\pi_0)\{(\alpha/2)^\gamma+\rho[\lambda^\gamma-(\alpha/2)^\gamma]\}}{E_F[\mu(P)]}$, $q_2=\frac{\pi_0\rho(\alpha-\lambda) + (1-\pi_0)\rho(\alpha^\gamma-\lambda^\gamma)}{E_F[\mu(P)]}$, if $\lambda\geq \alpha/2$;
$q_1=\frac{\pi_0\lambda+(1-\pi_0)\lambda^\gamma}{E_F[\mu(P)]}$, $q_2=\frac{\pi_0(\rho\frac{\alpha}{2}+\frac{\alpha}{2}-\lambda) + (1-\pi_0)[\rho(\alpha^\gamma-\frac{\alpha}{2}^\gamma)+(\frac{\alpha}{2})^\gamma-\lambda^\gamma]}{E_F[\mu(P)]}$, if $\lambda<\alpha/2$. 
\end{corollary}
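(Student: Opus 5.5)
This corollary follows by specialising the variance approximation of Theorem \ref{varMME} to the step-function SPM \eqref{cs}. The approximation itself is taken as given, so the only task is to compute $q_1$, $q_2$ and the normalising constant $E_F[\mu(P)]$ in closed form under the mixture \eqref{betamixture}, with $F(t)=\pi_0t+(1-\pi_0)t^\gamma$.

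Since $\mu$ is piecewise constant, every expectation reduces to $F$-increments over intervals. For any $0\le a<b\le 1$ we have $E_F[1_{(a,b]}(P)]=\pi_0(b-a)+(1-\pi_0)(b^\gamma-a^\gamma)$, and endpoints carry no mass because $F$ is continuous on $(0,1]$.

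The first step is the normalising constant. I would take $\mu(p)=0$ for $p>\alpha$, which is what the displayed $E_F[\mu(P)]$ implicitly assumes. Then
\[
E_F[\mu(P)]=F(\alpha/2)+\rho\{F(\alpha)-F(\alpha/2)\}.
\]
Expanding $F$ gives $\pi_0(1+\rho)\alpha/2+(1-\pi_0)\{(\alpha/2)^\gamma+\rho[\alpha^\gamma-(\alpha/2)^\gamma]\}$, as stated.

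Next I would split according to where $\lambda$ sits relative to the jump point $\alpha/2$.

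If $\lambda<\alpha/2$, the interval $[0,\lambda]$ lies entirely in the region where $\mu=1$, so $E_F[\mu 1_{[0,\lambda]}]=F(\lambda)=\pi_0\lambda+(1-\pi_0)\lambda^\gamma$. The interval $(\lambda,\alpha]$ straddles the jump and contributes
\[
\{F(\alpha/2)-F(\lambda)\}+\rho\{F(\alpha)-F(\alpha/2)\}.
\]
This gives $\pi_0(\rho\alpha/2+\alpha/2-\lambda)+(1-\pi_0)[\rho(\alpha^\gamma-(\alpha/2)^\gamma)+(\alpha/2)^\gamma-\lambda^\gamma]$.

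If $\lambda\ge\alpha/2$, the upper interval lies entirely in the $\rho$-region, so it contributes $\rho\{F(\alpha)-F(\lambda)\}=\pi_0\rho(\alpha-\lambda)+(1-\pi_0)\rho(\alpha^\gamma-\lambda^\gamma)$. The lower interval gives $F(\alpha/2)+\rho\{F(\lambda)-F(\alpha/2)\}$, which expands to $\pi_0[\alpha/2+\rho(\lambda-\alpha/2)]+(1-\pi_0)\{(\alpha/2)^\gamma+\rho[\lambda^\gamma-(\alpha/2)^\gamma]\}$.

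Dividing each piece by $E_F[\mu(P)]$ gives $q_1$ and $q_2$, and substituting these into \eqref{eq:varMME} completes the argument.

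There is no real obstacle. The main care point is that the overall proportionality constant in $\mu$ cancels between numerator and denominator, so $q_1$ and $q_2$ are genuine conditional probabilities. The other is handling the case boundary and the closed-versus-open interval conventions consistently, which is harmless by continuity of $F$.
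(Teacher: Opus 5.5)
Your proposal is correct and matches the paper's argument: the corollary has no separate written proof and follows, as \eqref{cstep} does from \eqref{df:biasMME}, by inserting the step SPM \eqref{cs} and $F(t)=\pi_0t+(1-\pi_0)t^\gamma$ into the definitions of $q_1$, $q_2$ and $E_F[\mu(P)]$ in Theorem \ref{varMME}, and your interval-by-interval evaluations reproduce every displayed expression. One small remark: you do not need to choose $\mu=0$ on $(\alpha,1]$, since the indicators in \eqref{cs} already vanish there, which also means $q_1+q_2=1$ in this corollary.
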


\begin{corollary}
\normalfont
Under the SPM \eqref{spbeta}, the variance of the MME \eqref{nprhl} is 
\begin{equation}
     \text{Var}(\widehat{\text{FDR}})\approx \frac{\alpha^2}{(\alpha-\lambda)^2}\frac{q_1q_2}{n(q_1+q_2)^3},\nonumber
\end{equation}
where $E_F[\mu(P)]=\frac{\pi_0}{\eta}+(1-\pi_0)\gamma B(1; \gamma, \eta)$,\\ $q_1=\frac{\frac{\pi_0}{\eta}[1-(1-\lambda)^{\eta}]+(1-\pi_0)\gamma B(\lambda; \gamma, \eta)}{E_F[\mu(P)]}$, $q_2=\frac{\frac{\pi_0}{\eta}[(1-\lambda)^{\eta}-(1-\alpha)^{\eta}]+(1-\pi_0)\gamma [B(\alpha; \gamma, \eta)-B(\lambda; \gamma, \eta)]}{E_F[\mu(P)]}$, and $B(x; a,b)=\int_0^x t^{a-1}(1-t)^{b-1}dt$.
\end{corollary}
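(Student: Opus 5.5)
The plan is to apply Theorem \ref{varMME} directly. The approximation \eqref{eq:varMME} holds for any SPM in Definition \ref{def:PPM}, so the only work is to evaluate the three integrals $E_F[\mu(P)]$, $E_F[\mu(P)1_{[0,\lambda]}(P)]$ and $E_F[\mu(P)1_{[\lambda,\alpha]}(P)]$. We do this for $\mu(p)=(1-p)^{\eta-1}$ from \eqref{spbeta} and the Beta mixture \eqref{betamixture}. Differentiating \eqref{betamixture}, the density of an original $p$-value is $f(t)=\pi_0+(1-\pi_0)\gamma t^{\gamma-1}$ on $(0,1]$. Since $\mu$ and $f$ are integrable and $F$ has no atoms for $\gamma\in(0,1)$, whether interval endpoints are open or closed is immaterial.

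The first step is one master integral. For $0\le x\le 1$, split by the mixture components:
\begin{equation*}
E_F[\mu(P)1_{[0,x]}(P)]=\pi_0\int_0^x(1-t)^{\eta-1}\,dt+(1-\pi_0)\gamma\int_0^x t^{\gamma-1}(1-t)^{\eta-1}\,dt .
\end{equation*}
The null part integrates in closed form to $\{1-(1-x)^{\eta}\}/\eta$, since $\eta\ge 1$. The alternative part is, by definition, $\gamma B(x;\gamma,\eta)$ with the incomplete Beta function from \eqref{eq:biasbeta}. It is finite because $\gamma>0$ makes $t^{\gamma-1}$ integrable at $0$, and $\eta\ge1$ keeps $(1-t)^{\eta-1}$ bounded. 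This also confirms $E_F[\mu(P)]<\infty$, so $\mu$ is a legitimate SPM.

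The three needed quantities are then read off. Taking $x=1$ gives $E_F[\mu(P)]=\pi_0/\eta+(1-\pi_0)\gamma B(1;\gamma,\eta)$. Taking $x=\lambda$ gives the numerator of $q_1$. Differencing the master integral at $x=\alpha$ and $x=\lambda$ gives the numerator of $q_2$, namely $\frac{\pi_0}{\eta}[(1-\lambda)^\eta-(1-\alpha)^\eta]+(1-\pi_0)\gamma[B(\alpha;\gamma,\eta)-B(\lambda;\gamma,\eta)]$. Dividing by $E_F[\mu(P)]$ as in Theorem \ref{varMME} and substituting into \eqref{eq:varMME} yields the stated expression. This is the same computation that produced $G$ and $K$ in \eqref{eq:biasbeta}, which serves as a consistency check on the $q_2$ numerator.

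There is no real obstacle; the only delicate point is the boundary case $\gamma=0$ (power one). There $F_1$ is a point mass at $0$ and the term $\gamma B(x;\gamma,\eta)$ must be read as its limit. Since $\gamma\int_0^x t^{\gamma-1}(1-t)^{\eta-1}dt\to1$ as $\gamma\downarrow0$ for every $x>0$, the formulas extend continuously. I would state the corollary for $\gamma\in(0,1)$, as in \eqref{eq:biasbeta}, and remark on this limit. Finally, $q_3$ does not appear because the delta-method variance in Theorem \ref{varMME} depends only on $q_1$ and $q_2$.
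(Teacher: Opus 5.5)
Your proposal is correct and follows the same route the paper implicitly uses: the paper states this corollary without a separate proof, as a direct specialization of Theorem~\ref{varMME}. Your master integral $E_F[\mu(P)1_{[0,x]}(P)]=\frac{\pi_0}{\eta}\{1-(1-x)^{\eta}\}+(1-\pi_0)\gamma B(x;\gamma,\eta)$, evaluated at $x=1,\lambda,\alpha$, reproduces $E_F[\mu(P)]$, $q_1$ and $q_2$ exactly, and your remarks on endpoint conventions and the $\gamma\downarrow 0$ limit are correct additions beyond what the paper states.
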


\end{document}